\documentclass{ifacconf}
\usepackage{natbib}
\usepackage{graphics} % for pdf, bitmapped graphics files
\usepackage{epsfig} % for postscript graphics files
\usepackage{times} % assumes new font selection scheme installed
\usepackage{amsmath} % assumes amsmath package installed
\usepackage{amssymb}  % assumes amsmath package installed
\usepackage{amsfonts}
\usepackage{color}
\usepackage{algorithm,algpseudocode}
\usepackage{soul}
\usepackage{tikz}
\usepackage{tikz-network}
\usepackage{csquotes}
\usetikzlibrary{backgrounds}
\tikzset{background rectangle/.style={fill=yellow!30}}
\usetikzlibrary{arrows.meta, positioning}
\usepackage{url}

\usepackage{enumitem}  
\setlist[description]{leftmargin=\parindent}
\usepackage[mathscr]{eucal}
\usepackage{subfigure}
\usepackage{comment}

\def\diag{{\rm diag}}

\def\im{{\rm Im}}
\newcommand{\Cm}{\mathcal{C}}
\newcommand{\Dm}{\mathcal{D}}
\newcommand{\Em}{\mathcal{E}}
\newcommand{\Gm}{\mathcal{G}}

\newcommand{\Imin}{\mathcal{B}}

\newcommand{\Pm}{\mathcal{P}}

\newcommand{\Sm}{\mathcal{S}}
\newcommand{\Tm}{\mathcal{T}}
\newcommand{\Vm}{\mathcal{V}}
\newcommand{\Wm}{\mathcal{W}}

\newcommand{\Zm}{\mathcal{Z}}

\newcommand{\beq}{\begin{equation}}
\newcommand{\eeq}{\end{equation}}
\newcommand{\beqa}{\begin{eqnarray}}
\newcommand{\eeqa}{\end{eqnarray}}
\newcommand{\beqan}{\begin{eqnarray*}}
\newcommand{\eeqan}{\end{eqnarray*}}

\newcommand{\bite}{\begin{itemize}}
\newcommand{\eite}{\end{itemize}}
\newcommand{\benu}{\begin{enumerate}}
\newcommand{\eenu}{\end{enumerate}}

\newcommand\rednote[1]{\textcolor{red}{#1}}

\definecolor{darkpastelgreen}{rgb}{0.01, 0.75, 0.24}

\usepackage{placeins}
\usepackage{bm}
\usepackage{lipsum}
\algnewcommand{\LeftComment}[1]{\Statex \(\triangleright\) #1}

\begin{document}
\begin{frontmatter}

\title{Minimal Input Cardinality Disturbance Decoupling of Coupled Oscillators via Output Feedback with Application to Power Networks\thanksref{footnoteinfo}} 
% Title, preferably not more than 10 words.

\thanks[footnoteinfo]{Work supported in part by grants from the Swedish Research Council (grant n. 2020-03701 to C.A.) and by the ELLIIT program at Link\"oping University.}

\author[First]{Luca Claude Gino Lebon} 
\author[Second]{Johan Lindberg} 
\author[First]{Claudio Altafini}

\address[First]{Department of Electrical Engineering, Division of Automatic Control,
        Linkoping University, SE-581 83 Linköping, Sweden
        (e-mails: luca.lebon@liu.se, claudio.altafini@liu.se)}
\address[Second]{Department of Electrical Engineering, Division of Automatic Control, Lund University, SE-221 00 Lund, Sweden
        (e-mail: johan.lindberg@control.lth.se)}

\begin{abstract}                % Abstract of 50--100 words
In this paper, we identify the smallest set of control input nodes and an associated output feedback law that achieves complete disturbance decoupling for a class of coupled oscillator networks. The focus is specifically on systems linearized around a stable phase-locked synchronized state. The proposed theoretical framework is applied to the linearized swing dynamics of power grids operating near synchronization. In this context, the disturbance decoupling problem corresponds to isolating subsets of nodes from exogenous disturbances by means of batteries that can both add or withdraw active power. Numerical simulations carried out on the IEEE New England 39-bus system show that the proposed methodology not only yields a minimal actuator placement ensuring effective disturbance rejection, but also preserves the internal stability of the closed-loop system.
\end{abstract}
\begin{keyword}
Disturbance rejection and input-to-state stability, Control over networks, Resilient networked control systems, Electrical transmission systems, Power systems stability
\end{keyword}

\end{frontmatter}

%%%%%%%%%%%%%%%%%%%%%%%%%%%%%%%%%%%%%%%%%%%%%%%%%%%%%%%%%%%%%%%%%%%%%%%%%%%%%%%%

\section{Introduction}
The control of power networks, or more generally of coupled oscillators, has attracted significant attention in the systems and control community due to its broad practical relevance (\cite{Glover,skardal2015control,Mach20}). Of particular interest is the system behavior under synchronization, i.e., when all agents rotate at a common frequency. Considerable research has been devoted to achieving stable synchronization in inductive microgrids using DC/AC droop-controlled inverters (\cite{simpson2013synchronization,dorfler2013synchronization}). Moreover, game-theoretic analyses have been employed to study the loss of coherence in synchronized oscillator networks (\cite{yin2011synchronization}).

Building on these results, this paper assumes that a stable synchronized state has been reached, allowing the system to be linearized around such an equilibrium. In this regime, the dynamics can be represented in state-space form, or more generally, in descriptor form. Within this framework, prior work has focused on attack detection and identification, where attacks are modeled as disturbance signals acting on specific nodes of the network (\cite{pasqualetti2013attack}). In this study, we assume that the attack signature, i.e., the location of the disturbances in the network, is known.  We design a controller that decouples a subset of target nodes (those to be protected) from the influence of the disturbances. Related research on disturbance rejection for lossy and lossless power systems has been conducted using $H_2$ and $H_\infty$ norm techniques (\cite{lindberg2022fundamental,Pates22}). Other works include active power injection to damp inter-area oscillations (\cite{Bryne14}). Here, we extend this line of work by selecting a minimal number of control inputs required to achieve decoupling, via output feedback from specific measurement nodes.
%, under the assumption that the control law is an output feedback from selected measurement nodes. 
In large-scale power systems, ensuring that local disturbances do not propagate to critical or protected regions is essential for maintaining stability and reliability. Traditional protection schemes, such as disconnecting transmission lines or isolating generators, can effectively contain disturbances but often at the expense of service continuity and operational efficiency. The proposed approach enables the real-time decoupling of disturbance propagation between network regions without interrupting nominal operation. This is achieved through the strategic selection of a minimal set of control nodes, which act as intervention points to compensate for the spread of the disturbances. By ensuring that only the smallest possible number of nodes are actuated, the method achieves disturbance rejection in the least invasive and most resource-efficient manner. The proposed control framework builds upon classical results in geometric control theory, reinterpreted and analyzed in the context of networked systems as discussed in~\cite{lebon2026geometric_tac}.
%This paper relies heavily on the results of~\cite{lebon2026geometric}, which are here quoted without proofs (and sometimes without statements) for lack of space. 

The remainder of the paper is organized as follows. Section~II provides background on the linearization of coupled oscillator networks around a synchronized equilibrium and introduces the Minimal Input Cardinality Disturbance Decoupling Problem. Section~III presents the main results, including the procedure for computing the optimal input set and the construction of the output feedback matrix. Section~IV illustrates the application of the proposed methods through a comprehensive numerical example on the IEEE 39-bus New England test system, subject to two step-type disturbances.
\section{Background Material}
\subsection{Notation}
We consider a digraph $ \Gm = ( \Vm, \, \Em , A )$, where $ \Vm =\{ v_1, \, \ldots, v_n\}$ is a set of $n$ nodes and $\Em $ a set of $q$ edges over $ \Vm$, with the convention that $ (v_i , \, v_j) \in \Em $ means $ v_i \to v_j $, i.e., $ v_i $ is the tail and $ v_j $ is the head of the edge. The edge-induced adjacency matrix is $A=(a_{ij})_{(i,j)\in\Em}\in\mathbb{R}^{n\times n}$. For $A$ nonnegative and symmetric we define the Laplacian matrix $L\in\mathbb{R}^{n\times n}$, as $L=\text{diag}(\{\sum_{j=1}^na_{ij}\}_{i=1}^n)-A$, and the incidence matrix $\Delta\in\mathbb{R}^{n\times q}$, whose nonzero elements are defined component-wise as $\Delta_{ij} = 1$ if node $i$ is the sink node of edge $j$ and as $\Delta_{ij} = -1$ if node $i$ is the source node of edge $j$. It can be proven that $L=\Delta\,\diag(\{a_{ij}\}_{(i,j)\in\Em})\,\Delta^\top$ and that for a connected graph $\ker(\Delta^\top)=\ker(L)=\text{span}(\mathbf{1}_n)$, where $\mathbf{1}_n$ denotes the vector of ones of length $n$. A directed path $ \Pm$ from node $v_i$ to node $v_j$ is a sequence of distinct nodes $v_i, v_{i+1}, \dots , v_{j-1}, v_j$ s.t. $(v_k,v_{k+1})\in \mathcal{E}$ $\forall\,k\in\{i,i+1, \dots, j-1\}$: $\Pm = \{ v_i,  \dots , v_j \} $ (or $ \Pm^{v_i, v_j} $ when we need to specify the extremities of the path). Given a set of nodes $ \Zm \subseteq \Vm$, with a slight abuse of notation, we can use $ \Zm $ to express subspaces of $ \mathbb{R}^{n}$: $\Zm \sim \left\{  x \in \mathbb{R}^n  \; \text{s.t.} \; x_i \neq 0\;  \forall \; i \in \Zm  \; \text{and} \; x_i= 0\; \forall \; i  \notin \Zm \right\}$.
In the following, we explicitly fix a basis of canonical vectors for a vector subspace associated with a set of nodes. Let $ e_i $ be an elementary column vector (i.e., the vector having $1$ on the $i$-th slot and $0$ elsewhere); then given a set of nodes $\Zm\subseteq\Vm$, the vector subspace of $\mathbb{R}^n$ associated with $\Zm$ is $\textnormal{span}\{e_i\;\text{s.t.}\;v_i\in\Zm\}$.
For any $ \Zm \subset \Vm$, the notation $ \Zm^\perp$ is used for both the complement set $ \Vm \smallsetminus \Zm$ and the subspace orthogonal to (the vector space) $ \Zm $ in $ \mathbb{R}^n$. For each subspace $ \Zm$, $ \exists $ a full column rank matrix $ Z$ s.t. $ \im Z = \Zm $ and $ \ker Z =\{0\}$. Such matrix is called a basis matrix of $\Zm$. When $ \Zm $ is a set of nodes, then $ Z$ is a collection of elementary column vectors (i.e., one column for each node $ v_i \in \Zm$, having  1 in the $ v_i $-th slot and 0 elsewhere), or equivalently, $Z^\top$ is a collection of elementary row vectors. On $ \Gm = ( \Vm, \, \Em , A )$, for any subset of nodes $ \Wm\subseteq\Vm$ we can define the in- and out-boundary as follows. The in-boundary of $\Wm\subseteq\Vm$ in the digraph $ \Gm = ( \Vm, \, \Em , A )$ is the set of nodes in $\Wm$ that have at least one outgoing edge leading to a node in $\Wm^\perp$, i.e., $\partial_-(\Wm,A):=\{v_i\in\Wm\;\text{s.t.}\;\exists\,(v_i,v_j)\in\Em,\, v_j\in\Wm^\perp\}$. Similarly, the out-boundary of $\Wm$ is the set of nodes in $\Wm^\perp$ that have at least one incoming edge from $\Wm$, i.e., $\partial_+(\Wm,A):=\{v_j\in\Wm^\perp\;\text{s.t.}\;\exists\,(v_i,v_j)\in\Em,\, v_i\in\Wm\}$.
%For clarity, we note that three different adjacency matrices and their associated digraphs are used throughout the paper: $A$, $\tilde{A}$, and $\mathbf{A}$. %The matrix $A$ denotes the adjacency matrix describing how the oscillators are coupled in the nonlinear coupled-oscillator model. The matrix $\tilde{A}$ represents the coupling structure that arises when the model is linearized around a stable synchronized state. Finally, $\mathbf{A}$ denotes the state matrix in the associated descriptor system. 
\subsection{Coupled Oscillator Model}
Let us partition the node set as $\Vm=\Vm_1\cup\Vm_2$, with $r=|\Vm_1|$, $k=|\Vm_2|$ and $n=r+k$. Consider a symmetric, nonnegative edge-induced adjacency matrix $A$ with no self-loops. The Laplacian matrix $L\in\mathbb{R}^{n\times n}$ is symmetric and positive semi-definite. Each node in $\Vm$ is associated with a phase oscillator, whose dynamics is second-order Newtonian if the node is in $\Vm_1$ and first-order kinematic if the node is in $\Vm_2$, i.e.,
\begin{equation}\label{eq:nl_oscill}
\begin{aligned}
   M_i\ddot{\theta}_i+D_i\dot{\theta}_i &= {f}_i-\sum_{j=1}^n a_{ij}\sin(\theta_i-\theta_j),\qquad i\in\Vm_1,\\
   D_i\dot{\theta}_i &= {f}_i-\sum_{j=1}^n a_{ij}\sin(\theta_i-\theta_j),\qquad i\in\Vm_2,
\end{aligned}
\end{equation}
where $M_i>0$ is the inertia constant, $D_i>0$ is the damping coefficient, $f_i\in\mathbb{R}$ is the natural frequency, $\theta_i\in\mathbb{S}^1$ is the phase and $\dot{\theta}_i\in\mathbb{R}$ is the frequency of the oscillator $i$. For such dynamical system the state trajectories evolve in $\mathbb{T}^n\times\mathbb{R}^r$, where $\mathbb{T}^n=\mathbb{S}^1 \times \dots \times \mathbb{S}^1$ denotes the $n$-torus. If $i\in\Vm_2$, $\Vm_1=\emptyset$ with state $x=[\theta_1\dots\theta_k]^\top\in\mathbb{T}^k$, we obtain the Kuramoto model (\cite{dorfler2014synchronization}). If $i\in\Vm_1$, $\Vm_2=\emptyset$ with state $x=[\theta_1\dots\theta_r \,\dot{\theta}_1\dots\dot{\theta}_r]^\top\in\mathbb{T}^r\times\mathbb{R}^r$, we obtain the Mechanical Spring equations (\cite{dorfler2013synchronization}). We consider for generality both $\Vm_1$ and $\Vm_2$ to be nonempty. %The dynamics of the nonlinear system~\eqref{eq:nl_oscill} evolves on a torus, and thus it exhibits a wide range of phenomena (e.g., periodic orbits, stable and unstable equilibria, saddles, etc).

In this work, we focus on the case in which the coupled oscillator reaches a stable equilibrium in a rotating frame that rotates at the synchronization frequency. This case is called frequency synchronization (or phase-locking synchronization) in the literature (\cite{dorfler2014synchronization}). Moreover, we restrict ourselves to the case in which such synchronization is stable, i.e., the phases are cohesive, i.e., not only do they rotate at the same synchronization frequency $\omega^\ast$, but they are also pairwise bounded at a distance smaller than $\pi/2$. Formally, given $\theta^\ast=[\theta_1\dots\theta_n]^\top$ we have $\dot{\theta}^\ast=\omega^\ast\mathbf{1}_n$, and $|\theta_i-\theta_j|\leq\gamma$, $\gamma\in[0,\pi/2[$, $\forall\,(i,j)\in\Em$. The condition for obtaining a unique stable synchronization solution relies both on the network topology, through $L$, and on the natural frequencies difference for every connected pair of connected oscillators. If $f=[f_1\dots f_n]^\top$ denotes the vector of natural frequencies of each oscillator, and if we define the damping matrix as $\diag(\{D_i\}_{i\in\Vm})$, then the coupled oscillator model~\eqref{eq:nl_oscill} has a unique and stable solution $\theta^\ast$ with synchronized frequencies and cohesive phases if $||\Delta^\top L^\dagger\beta||_\infty\leq\sin(\gamma)$, being $L^\dagger$ the pseudoinverse of the network Laplacian matrix $L$, and $\beta=-\diag(\{D_i\}_{i\in\Vm})\omega^\ast\mathbf{1}_n+f$ the particular solution of the fixed-point equation $\beta=\Delta\,\diag(\{a_{ij}\}_{(i,j)\in\Em})\,\Delta^\top L^\dagger\beta$ (a rigorous treatment of this condition can be found in \cite{dorfler2013synchronization} and \cite{simpson2013synchronization}). %The condition for obtaining a unique stable synchronization solution relies both on the network topology, through $L$, and on the natural frequencies difference for every connected pair of connected oscillators (a rigorous treatment of this condition can be found in \cite{dorfler2013synchronization} and \cite{simpson2013synchronization}).
%If $f=[f_1\dots f_n]^\top$ denotes the vector of natural frequencies of each oscillator, and if we define the damping matrix as $\diag(\{D_i\}_{i\in\Vm})$, then the coupled oscillator model~\eqref{eq:nl_oscill} has a unique and stable solution $\theta^\ast$ with synchronized frequencies and cohesive phases if $||\Delta^\top L^\dagger\beta||_\infty\leq\sin(\gamma)$, being $L^\dagger$ the pseudoinverse of the network Laplacian matrix $L$, and $\beta=-\diag(\{D_i\}_{i\in\Vm})\omega^\ast\mathbf{1}_n+f$ the particular solution of the fixed-point equation $\beta=\Delta\,\diag(\{a_{ij}\}_{(i,j)\in\Em})\,\Delta^\top L^\dagger\beta$ (a rigorous treatment of this condition can be found in \cite{dorfler2013synchronization} and \cite{simpson2013synchronization}).
In this work, we implicitly assume that a stable synchronization solution exists, and we restrict our focus to the coupled oscillators that satisfy such a condition. Under such assumption, setting $\Vm_1=\{1,\dots,r\}$, $\Vm_2=\{r+1,\dots,r+k\}$, $\theta_{\Vm_1}=[\theta_1\dots\theta_r]^\top$ and $\theta_{\Vm_2}=[\theta_{r+1}\dots\theta_{r+k}]^\top$, and defining the state of the system as $x=[\dot{\theta}_{\Vm_1}^\top\;\theta_{\Vm_1}^\top\;\theta_{\Vm_2}^\top]^\top\in\mathbb{R}^N$, where $N=n+r$, it is possible to linearize the equations~\eqref{eq:nl_oscill} around $x^\ast=[\omega^\ast\mathbf{1}^\top_r\;\theta^{\ast\top}_{\Vm_1}\;\theta^{\ast\top}_{\Vm_2}]^\top$, $f^\ast=[f^{\ast\top}_{\Vm_1}\;f^{\ast\top}_{\Vm_2}]^\top=[f^\ast_1\dots f^\ast_{r}\;f^\ast_{r+1}\dots f^\ast_{r+k}]^\top$. Defining the deviations w.r.t. the linearization point as $\Tilde{\omega}_i=\dot{\theta}_i-\omega^\ast$, $\Tilde{\theta}_i=\theta_i-\theta_i^\ast$, and $\Tilde{f}_i=f_i-f_i^\ast$, we obtain the linearized coupled oscillator equations
\begin{equation}\label{eq:l_oscill}
\begin{aligned}
   M_i\dot{\Tilde{\omega}}_i&=-D_i\Tilde{\omega}_i-\sum_{j=1}^na_{ij}\cos(\theta^\ast_i-\theta^\ast_j)(\Tilde{\theta}_i-\Tilde{\theta}_j)+\Tilde{f}_i,\\
   \dot{\theta}_i&=\Tilde{\omega}_i,\quad i\in\Vm_1;\\
D_i\dot{\theta}_i&=-\sum_{j=1}^na_{ij}\cos(\theta^\ast_i-\theta^\ast_j)(\Tilde{\theta}_i-\Tilde{\theta}_j)+\Tilde{f}_i,\quad i\in\Vm_2.
\end{aligned}
\end{equation}

It is possible to define a new adjacency matrix $\Tilde{A}=(\Tilde{a}_{ij})_{(i,j)\in\Em}$\\$=(a_{ij}\cos(\theta^\ast_i-\theta^\ast_j))_{(i,j)\in\Em}\in\mathbb{R}^{n\times n}$ that is symmetric, nonnegative and does not have self-loops. Those properties are inherited from $A$ and by noting that the cosine is an even function and has image in $[0,1[$ since the phases are cohesive. It follows that the new Laplacian matrix $K=\text{diag}(\{\sum_{j=1}^n\Tilde{a}_{ij}\}_{i=1}^n)-\Tilde{A}=\begin{bmatrix}
    K_{a} & K_{b}\\
    K^\top_{b} & K_{c}
\end{bmatrix}\in\mathbb{R}^{n\times n}$, $K_a\in\mathbb{R}^{r\times r}$, $K_b\in\mathbb{R}^{r\times k}$, $K_c\in\mathbb{R}^{k\times k}$, 
is also symmetric and positive semi-definite. Defining the state and partitioning the natural frequency deviations as $\Tilde{x}=x-x^\ast$ and $\Tilde{f}=f-f^\ast=[\Tilde{f}_{\Vm_1}^\top\;\Tilde{f}_{\Vm_2}^\top]^\top$, and setting the mass-inertia matrix as $M = \text{diag}(\{M_i\}_{i\in\Vm_1})$, and the damping matrices as $D_{\Vm_1} = \text{diag}(\{D_i\}_{i\in\Vm_1})$, $D_{\Vm_2} = \text{diag}(\{D_i\}_{i\in\Vm_2})$, we obtain the following system in descriptor form
\begin{equation}\label{eq:descriptor}
    \mathbf{E}\dot{\Tilde{x}}=\mathbf{A}\Tilde{x}+\mathbf{B}\Tilde{f},
\end{equation}
with $\mathbf{E},\,\mathbf{A}\in\mathbb{R}^{N\times N}$ and $\mathbf{B}\in\mathbb{R}^{N\times n}$ of the form
\begin{equation*}
    \mathbf{E}\hspace{-1pt}=\hspace{-2pt}\begin{bmatrix}
        M&\mathbf{0}&\mathbf{0}\\
        \mathbf{0}&I_r&\mathbf{0}\\
        \mathbf{0}&\mathbf{0}&D_{\Vm_2}
    \end{bmatrix}\hspace{-3pt},\,
    \mathbf{A}\hspace{-1pt}=\hspace{-2pt}\begin{bmatrix}
        -D_{\Vm_1}&-K_a&-K_b\\
        I_r&\mathbf{0}&\mathbf{0}\\
        \mathbf{0}&-K^\top_b&-K_c
    \end{bmatrix}\hspace{-3pt}, \,
    \mathbf{B}\hspace{-1pt}=\hspace{-2pt}\begin{bmatrix}
        I_r&\mathbf{0}\\
        \mathbf{0}& \mathbf{0}\\
        \mathbf{0}&I_k
        \end{bmatrix}\hspace{-3pt},
\end{equation*}
with $N=2r+k=n+r$, $I_\alpha$ denoting the identity matrix of dimension $\alpha$ and $\mathbf{0}$ a zero matrix whose dimensions have been omitted for clarity. We assume hereafter $\mathbf{E}$ to be non-singular, which automatically guarantees that the pair $(\mathbf{E},\mathbf{A})$ is regular, and the input signal $\Tilde{f}$ to be smooth and sufficiently small to not compromise stability. For the system~\eqref{eq:descriptor} it is possible to define an extended graph $\overline{\Gm}=(\overline{\Vm},\overline{\Em},\mathbf{A})$ with adjacency matrix $\mathbf{A}$.

\subsection{Electromechanical Power System Model}\label{sec:electr_model}
Eq.~\eqref{eq:nl_oscill} can be used to describe a simplified power system model which is typically used for the control of the frequency in a purely inductive transmission network. The transmission interconnection is governed by $a_{ij}$\footnote{$a_{ij}=\frac{V_i V_j}{X_{ij}}$ where $V_i$ and $V_j$ are the voltage magnitudes in nodes $i$ and $j$. $X_{ij}$ is the line reactance of a line connecting nodes $i$ and $j$. If there is no connection between nodes $i$ and $j$, $a_{ij}=0$. }. 
In power system modeling, the part associated with $\Vm_1$ is often called the swing equation, and it is used to model the first-order dynamics of the mechanical swing of the generators, with inertia constant $M_i$, in a high voltage transmission grid (where lines can be approximated as purely inductive and voltages are assumed constant) (\cite{Glover}). The part in \eqref{eq:nl_oscill} associated with $\Vm_2$ is related to the buses in the power system without a generator. In the linearized case in \eqref{eq:l_oscill}, $\Tilde{\theta}$ and $\Tilde{\omega}$ are the phase and frequency deviations from the nominal frequency of the grid. The damping matrix $D_{\Vm_{2}}$ is very small, often approximated with the zero matrix, but for our numerical examples, we will approximate it with $\epsilon I_k$, with $\epsilon=10^{-4}$, to avoid a singular $\mathbf{E}$. 
The choice of $\epsilon$ arises from the phase velocity $v_p=1/\sqrt{L'C'}$ \cite[p. 10]{pai89}. This gives $t_p$ between $0.1-1.2\,ms$ as the propagation time for the transmission lines of the numerical example, in Section~\ref{NumEx}\footnote{$L'$ is the inductance per unit (p.u.) length, $C'$ is the capacitance p.u. length. Propagation time $t_p$ is calculated by $t_p=l\sqrt{L'C'}$, with line length $l$ given in the p.u. length.} \cite[Appendix A]{pai89}. Since many lines enter a node, we choose the smallest, making $\epsilon=10^{-4}\,s$. %This corresponds to replacing a very short time delay $\epsilon$ with a low-pass filter with time constant $\epsilon$.
In power systems, something called droop control is often used to achieve frequency control. This is a control loop that changes active power output from generators in response to locally measured frequency deviations, to stabilize the frequency \cite[p.28-33]{Mach20}. Denote the static droop constant $e_p$, which is often defined in a \% term, usually around 5\%. This means that the frequency is allowed to deviate 5\% of the generator power deviation from nominal. %  In control terms this translates to a P-controller with gain 20 \rednote{LL: (why is it 20? Because 1/0.05=20. The nominal power is not defined numerically.)}. Define the constant $k_{ij}:=P_{\mathrm{max},ij} \cos(\varphi_i^0 - \varphi_j^0) $ for simplicity, and note that $k_{ij}=k_{ji}\geq 0$. Call the nominal power of generator $i$ $P_i$. \eqref{eq:LinSwing} can be written in matrix form and with static droop control this becomes
From this, we get a simple expression for the deviation in mechanical power applied in \eqref{eq:l_oscill} with $D_i=- \frac{1}{e_p}\frac{P_i}{\omega^*}$, where $P_i$ is the rated power of generator $i\in \Vm_1$. There are also damper windings in the generator that add to this damping term, but they are much smaller than the damping given by the control \cite[p.195-199,209-210]{Mach20}. 

\subsection{Minimal Input Cardinality Disturbance Decoupling}
In system~\eqref{eq:descriptor}, we note that the matrix $\mathbf{B}$ has $n$ elementary columns; therefore, the range of $\mathbf{B}$ identifies the set of all admissible natural frequency deviations (interpreted as control inputs or disturbances) that may enter the system. The set of admissible input nodes is thus $\im(\mathbf{B})\cong\Vm_{\mathbf{B}}\subset\overline{\Vm}$. We assume that only a subset of the admissible natural frequency deviations is active in system~\eqref{eq:descriptor}. We denote with $\Vm_{\Tilde{f}}\subseteq\Vm_{\mathbf{B}}$, where $|\Vm_{\Tilde{f}}|=\ell\leq n$, the index set of the $\ell$ nonzero entries of the input $\Tilde{f}\in\mathbb{R}^n$ entering in the system~\eqref{eq:descriptor}. We denote with $\Dm\subseteq\Vm_{\Tilde{f}}$, where $|\Dm|=d$, the set of disturbances, and with $\Imin\subseteq\Vm_{\Tilde{f}}$, where $|\Imin|=m$, the set of control inputs. Assuming $\Dm\cap\Imin=\emptyset$, and setting $\Dm\cup\Imin=\Vm_{\Tilde{f}}$, we can therefore split the input $\mathbf{B}\Tilde{f}$ into the sum of an exogenous disturbance vector $Dw$ and a control input vector $Bu$. Here the matrices $D\in\mathbb{R}^{N\times d}$ and $B\in\mathbb{R}^{N\times m}$ have elementary columns and satisfy $\im(D)\cong\Dm$, $\im(B)\cong\Imin$, with $d+m=\ell$. We denote with $\Cm$, where $|\Cm|=p\leq N$, the set of output measurements and with $\Tm$, where $|\Tm|=t\leq N$, the set of targets. Both $ \Cm$ and $ \Tm $ correspond to sets of nodes: defining the matrices $C\in\mathbb{R}^{m\times N}$ and $T\in\mathbb{R}^{t\times N}$ with elementary rows and s.t. $\im(C^\top)\cong\Cm\subseteq\overline{\Vm}$ and $\im(T^\top)\cong\Tm\subseteq\overline{\Vm}$, and assuming $\Cm\cap\Tm=\emptyset$, we obtain the following system
\begin{equation}
\begin{split}
\mathbf{E}\dot{\Tilde{x}} & =  \mathbf{A}\Tilde{x} + B u + D w, \\
y & = C \Tilde{x}, \\
z & = T\Tilde{x}. 
\end{split}
\label{eq:lin-syst_ddp}
\end{equation}
For this system, the classical disturbance decoupling problem can be formulated as follows.

\begin{prob}[\textbf{DDP}]\label{problem:DDP0}
Given the system~\eqref{eq:lin-syst_ddp}, find a control law $u(t)$ that renders the target nodes in $\mathcal{T}$ unaffected by the disturbances in $\mathcal{D}$.
\end{prob}

It is well known that the solvability conditions for the DDP depend on the control law that is employed, and are geometrically formulated exploiting set inclusions involving controlled and conditioned invariant sets. As is proven in Propositions 2 and 3 of \cite{lebon2026geometric_tac}, under the setting of the paper the characterization of controlled and conditioned invariance can be carried out in terms of node sets. Since the matrix $\mathbf{E}$ is square, diagonal, and non-singular, the same definitions apply to the system~\eqref{eq:lin-syst_ddp} in the descriptor form. We give the following definitions of controlled and conditioned invariant sets of nodes for system~\eqref{eq:lin-syst_ddp}.
\begin{defn}\label{def:controlled_inv}
$ \Zm \subseteq \overline{\Vm}$ is a controlled invariant set of nodes if $\forall\,(v_i , \, v_j) \in \overline{\Em} $, $ v_i \in \Zm \; \Longrightarrow \; v_j \in \Zm \cup \Imin$.
\end{defn}
\begin{defn}\label{def:conditioned_inv}
$\Sm\subseteq\overline{\Vm} $ is a conditioned invariant set of nodes if $\forall\,(v_i , \, v_j) \in \overline{\Em} $, $ v_i \in \Sm$, $ v_i \notin \Cm$ $ \Longrightarrow \; v_j \in \Sm$.
\end{defn}

Applying the iterative schemes of Propositions~4 and 5 of \cite{lebon2026geometric_tac}, it is always possible to construct, resp., the maximal controlled invariant subset of a given node set $\Zm_0$, and the minimal conditioned invariant superset of a given node set $\Sm_0$. Assuming the initializations $\Zm_0=\overline{\Vm}\smallsetminus\Tm$ and $\Sm_0=\Dm$, we denote with $\Zm^\circ(\Imin)$ and $\Sm^\circ(\Cm)$ resp., the maximal controlled invariant node set contained in $\overline{\Vm}\smallsetminus\Tm$ and the minimal conditioned invariant node set containing $\Dm$. The node sets $\Zm^\circ(\Imin)$ and $\Sm^\circ(\Cm)$ are closely related to the maximal controlled invariant and the minimal conditioned invariant subspaces of the classical geometric control literature, denoted by $\Zm^\ast(\Imin)$ and $\Sm^\ast(\Cm)$ resp., that are computed by the standard recursions of \cite{basile1969controlled} and \cite{wonham1970decoupling}. In particular, assuming that the recursions start from the same initial conditions, in Theorem~1 of \cite{lebon2026geometric_tac} it is proven that $\Zm^\circ(\Imin)\subseteq\Zm^\ast(\Imin)$ and $\Sm^\ast(\Cm)\subseteq\Sm^\circ(\Cm)$. Therefore, since from Theorem~1.1 of \cite{fletcher1989disturbance} the standard disturbance decoupling condition by state feedback $\Dm\subseteq\Zm^\ast(\Imin)$ also applies to the system~\eqref{eq:lin-syst_ddp}, it follows that a sufficient condition for the solvability of the disturbance decoupling problem for the system~\eqref{eq:lin-syst_ddp} is given by the inclusion of node sets $\Dm\subseteq\Zm^\circ (\Imin)$. Selecting the disturbances and the targets s.t. $\Dm\cap\Tm=\emptyset$, we define the \emph{minimal input cardinality disturbance decoupling problem} as
\begin{equation}
\qquad \Biggl\{
    \begin{aligned}
      & \min_{\Imin} \quad |\Imin|\\
      & \text{subject to:} \quad \Dm\subseteq\Zm^\circ (\Imin)
    .\end{aligned}\label{eq:opt_problem1}
\end{equation}

As has been thoroughly investigated in \cite{lebon2026geometric_tac}, the optimal solution to Problem~\eqref{eq:opt_problem1} can be found in polynomial time and corresponds to the out-border of the set $\Zm^\circ (\Imin)$, i.e., $\Imin=\partial_+(\Zm^\circ(\Imin),\mathbf{A})$ (the proof can be found in Theorem 6 of \cite{lebon2026geometric_tac}). A solution is admissible for the system~\eqref{eq:descriptor} if $\Imin\subseteq\Vm_{\mathbf{B}}$. Since this work focuses solely on admissible solutions, the term admissible will be implied whenever a solution is mentioned.

\subsubsection{Disturbances and control inputs in the power system model}
In the power system model, disturbances of active power $w$ can be added to both generator and non-generator nodes. In the generator case, it can stem from a fault at the generator or an uneven supply of primary mechanical power, affecting the power output. At a load the power consumption can deviate from the linearization point, due to stochasticity in the electric power consumption of customers. Larger disturbances can also happen at a load due to a sudden disconnection of an area of consumption (\cite{Bryne14, SVK17}). We can choose to add or withdraw active power $u$, in a control effort, at both generators and non-generator buses. At generator buses, this can be added from the mechanical power applied to the generator, while at the non-generator buses, this can be done by actively changing consumption, or installing a battery that can both add or withdraw power to that node. %This will be our control input. 

\section{Main Result}
%\rednote{I would add the following sentence to justify DDPOF: The problem~\eqref{eq:opt_problem1} is formulated in terms of state feedback. This is not very realistic in power grid applications, as it requires to measure of phases at all nodes of $ \overline{\Vm}$. A more parsimonious scheme is to consider an output feedback law, which relies only on the measured outputs in the set $ \Cm$. }
In this section, the solution of the minimal input cardinality disturbance decoupling Problem~\eqref{eq:opt_problem1} is exploited for the design of an output feedback that achieves the decoupling. In fact, we note that Problem~\eqref{eq:opt_problem1} is formulated in terms of state feedback. This is not very realistic in power grid applications, as it requires measuring the states at all nodes of $ \overline{\Vm}$. A more parsimonious scheme is to consider an output feedback law, which relies only on the measured outputs in the set $ \Cm$. In the literature, this is known as the disturbance decoupling problem via output feedback (DDPOF). The choice of proposing an output feedback law $u=-Gy$ for the control input entering in system~\eqref{eq:lin-syst_ddp} is driven primarily by three motivations: 1) Reducing the norm of the control input $Bu$ entering in the state equation~\eqref{eq:descriptor} by minimizing the impact of the feedback on the preexisting topology in the closed-loop. 2) Reducing the set of measurements from $\overline{\Vm}$ (corresponding to the state feedback solution) to $\Cm$. 3) Reducing the delay in the sensing-actuation protocol (a dynamical feedback solution would have minimized the measurement set but augmented the computational time to provide the state estimate). The graphical characterization of the DDPOF over networks can be found in Theorem~3 of \cite{lebon2026geometric_tac}.
%\begin{thm}[Theorem 3 of \cite{lebon2026geometric}]\label{thm:ddpof}
%Problem~\ref{problem:DDP0} is solvable for the system~\eqref{eq:lin-syst_ddp} via the output feedback law $u=-Gy$ if any of the following equivalent conditions is met:
%\benu[label=(\alph*)]
%\item $\exists\,\Wm$ controlled invariant and conditioned invariant set of nodes s.t. $\Dm\subseteq\Wm\subseteq\Vm\smallsetminus\Tm$;\label{item:a_thm:ddpof}
%\item  Let $\Pm_1,\dots,\Pm_\ell$ be all possible $\Dm$-to-$\Tm$ paths in $\overline{\Gm}$. For every path $\Pm_j=\{v_1,\dots,v_{\kappa_j}\}$ of length $\kappa_j-1\geq1$, with $v_1\in\Dm$ and $v_{\kappa_j}\in\Tm$, $\exists$ at least one sub-path of length 1 of the form $\{v_{p},v_{p+1}\}$ with $v_{p}\in\Cm$ and $v_{p+1}\in\Imin$, $p\in\{1,\dots,\kappa_j-1\}$, $j\in\{1,\dots,\ell\}$.\label{item:c_thm:ddpof}
%\eenu
%\end{thm}
In the following, the output feedback that solves the DDPOF from the optimal solution $\Imin$ of the minimal input cardinality disturbance decoupling Problem~\eqref{eq:opt_problem1} is derived by first constructing a controlled and conditioned invariant node set $\Wm$ from $\Imin$ and $\Cm$.
Let us denote by $ \mathcal{P} = \bigcup_i \mathcal{P}_i $ the union of all $ \Dm$-to-$ \Tm$ paths in $\overline{\Gm}$. 
\begin{lem}\label{lem:1}
Given an optimal solution $\Imin=\partial_+(\Zm^\circ(\Imin),\mathbf{A})$ of Problem~\eqref{eq:opt_problem1}, the node set $\Wm:=\Zm^\circ(\Imin)$ is s.t. $\Dm\subseteq\Wm\subseteq\overline{\Vm}\smallsetminus\Tm$ and $\Imin=\partial_+(\Wm,\mathbf{A})=\partial_+(\Wm,\mathbf{A})\cap\Pm$.
\end{lem}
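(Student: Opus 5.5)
The plan is to verify the three claims in turn: the two inclusions, the identity $\Imin=\partial_+(\Wm,\mathbf{A})$, and the equality $\partial_+(\Wm,\mathbf{A})=\partial_+(\Wm,\mathbf{A})\cap\Pm$. The last one carries the content.

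\textbf{Inclusions and the boundary identity.} Since $\Imin$ is feasible for Problem~\eqref{eq:opt_problem1}, we have $\Dm\subseteq\Zm^\circ(\Imin)=\Wm$. By construction, $\Zm^\circ(\Imin)$ is the maximal controlled invariant node set contained in $\Zm_0=\overline{\Vm}\smallsetminus\Tm$, as produced by the recursion of Proposition~4 of \cite{lebon2026geometric_tac}. Hence $\Wm\subseteq\overline{\Vm}\smallsetminus\Tm$. The identity $\Imin=\partial_+(\Wm,\mathbf{A})$ is then just a renaming of the characterization of the optimum given in Theorem~6 of \cite{lebon2026geometric_tac}.

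\textbf{Reducing the last equality to an inclusion.} It suffices to show $\partial_+(\Wm,\mathbf{A})\subseteq\Pm$. I would argue by contradiction, using optimality. Suppose some $v_j\in\Imin=\partial_+(\Wm,\mathbf{A})$ lies on no $\Dm$-to-$\Tm$ path. Set $\Imin'=\Imin\smallsetminus\{v_j\}$. The goal is to exhibit a node set $\Wm'$ that is controlled invariant with respect to $\Imin'$ (Definition~\ref{def:controlled_inv}) and satisfies $\Dm\subseteq\Wm'\subseteq\overline{\Vm}\smallsetminus\Tm$. Then $\Dm\subseteq\Wm'\subseteq\Zm^\circ(\Imin')$, so $\Imin'$ is feasible with $|\Imin'|<|\Imin|$, which contradicts optimality.

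\textbf{Constructing $\Wm'$.} Let $\Wm'$ be the set of nodes of $\Wm$ reachable from $\Dm$ by directed paths lying entirely in $\Wm$; it contains $\Dm$ and sits inside $\Wm$.
\begin{itemize}
\item \emph{Edges leaving $\Wm'$.} Take an edge $(v_i,v_l)$ with $v_i\in\Wm'$. If $v_l\in\Wm$, then $v_l$ is reachable from $\Dm$ inside $\Wm$, so $v_l\in\Wm'$. Otherwise, by controlled invariance of $\Wm$, $v_l\in\Imin$.
\item \emph{Excluding $v_j$.} We need $v_l\neq v_j$. If $v_l=v_j$, then $v_j$ is reachable from $\Dm$, and I must argue that $v_j$ also reaches $\Tm$ via some path. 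This would place $v_j$ on a $\Dm$-to-$\Tm$ path, a contradiction.
\end{itemize}

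\textbf{Main obstacle.} The difficulty is that $v_j$ could be reachable from $\Dm$ and yet not reach $\Tm$. In that case I would instead enlarge the set to $\Wm'':=\Wm'\cup R(v_j)$, where $R(v_j)$ is the set of nodes reachable from $v_j$. No node of $R(v_j)$ reaches $\Tm$; otherwise $v_j$ would lie on a $\Dm$-to-$\Tm$ path (after extracting a simple path from the concatenation). Hence $R(v_j)\cap\Tm=\emptyset$, and $R(v_j)$ is closed under out-edges, so it is trivially controlled invariant. Therefore $\Wm''$ is controlled invariant with respect to $\Imin'$, contains $\Dm$, and avoids $\Tm$. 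This gives the required contradiction.

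Some care is needed with path simplicity. Concatenating a $\Dm$-to-$v_j$ walk with a $v_j$-to-$\Tm$ walk yields a walk, which may revisit nodes. I would extract a simple path from it; the extracted path may shortcut around $v_j$. To handle this I would first take a shortest $\Dm$-to-$v_j$ path inside $\Wm'\cup\{v_j\}$, then a shortest $v_j$-to-$\Tm$ path. I would argue these two can be chosen disjoint except at $v_j$, or else show that the shortcut path still certifies the needed node. This is the delicate step.
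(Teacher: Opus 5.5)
Your two inclusions are fine, and $\Imin=\partial_+(\Wm,\mathbf{A})$ is just the hypothesis. For the last equality the paper gives no argument of its own; it invokes item (b) of Theorem~6 of \cite{lebon2026geometric_tac}. A self-contained removal argument would therefore be a genuinely different route, and your case $v_j\notin\partial_+(\Wm',\mathbf{A})$ is correct.

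The proof is incomplete exactly where you flag it. When $v_j\in\partial_+(\Wm',\mathbf{A})$ you must show $v_j\in\Pm$. Being reachable from $\Dm$ and reaching $\Tm$ does not imply this, and concatenating shortest paths does not either. A shortest continuation from $v_j$ can re-enter $\Wm$, since edges into $\Wm$ are unconstrained. It can then cross your $\Dm$-to-$v_j$ path and leave $\Wm$ through another input node, so loop erasure bypasses $v_j$.

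For instance, take edges $d\to b$, $d\to c$, $b\to v$, $v\to b$, $b\to w$, $c\to w$, $w\to t_1$ and $v\to x_1\to x_2\to x_3\to t_2$, with $\Dm=\{d\}$ and $\Tm=\{t_1,t_2\}$.
\begin{itemize}
\item $\Imin=\{v,w\}$ is feasible. It is also optimal, because the paths $d,c,w,t_1$ and $d,b,v,x_1,x_2,x_3,t_2$ share only $d$.
\item The maximal controlled invariant set is $\Wm=\{d,b,c\}$, and its out-boundary is exactly $\Imin$.
\item The only $\Dm$-to-$v$ path is $d,b,v$, and the shortest $v$-to-$\Tm$ path is $v,b,w,t_1$.
\item Their concatenation erases to $d,b,w,t_1$, which misses $v$, even though $v$ does lie on $d,b,v,x_1,x_2,x_3,t_2$.
\end{itemize}

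The missing ingredient is $v_j\notin\Wm=\Zm^\circ(\Imin)$, combined with maximality. Consider the nodes outside $\Tm$ that have no walk to $\Tm$ whose nodes after the first avoid $\Imin$. This set is controlled invariant and lies in $\overline{\Vm}\smallsetminus\Tm$, so it is contained in $\Wm$. Hence either $v_j\in\Tm$, in which case $v_j\in\Pm$ immediately, or $v_j$ has an $\Imin$-avoiding walk to $\Tm$.

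That walk never enters $\Wm$: out-edges of $\Wm$ land in $\Wm\cup\Imin$, $\Imin$ is avoided, and $\Wm\cap\Tm=\emptyset$. So it meets your $\Dm$-to-$v_j$ path only at $v_j$. Loop-erasing it and appending it gives a simple $\Dm$-to-$\Tm$ path through $v_j$. This also shows that your subcase where $v_j$ does not reach $\Tm$ is empty, so $R(v_j)$ is not needed.

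A shorter repair skips $\Wm'$ altogether. Infeasibility of $\Imin\smallsetminus\{v_j\}$ gives, by the same maximality argument, a walk from $\Dm$ to $\Tm$ that avoids $\Imin\smallsetminus\{v_j\}$ after its start. Loop-erase it first. Only then apply feasibility of $\Imin$ to the resulting simple path, which forces it through $v_j$. Applying feasibility after the erasure is what removes the shortcut problem.
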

\begin{pf}
$\Dm\subseteq\Wm$ follows from the fact that $\Dm\subseteq\Pm$ and $\Dm\subseteq\Zm^\circ(\Imin)$, since $\Imin$ is a solution of Problem~\eqref{eq:opt_problem1}. $\Wm\subseteq\overline{\Vm}\smallsetminus\Tm$ follows by noting that $\Wm$ is a subset of $\Zm^\circ(\Imin)$, which is contained in $\overline{\Vm}\smallsetminus\Tm$. The fact that $\partial_+(\Wm,\mathbf{A})=\partial_+(\Wm,\mathbf{A})\cap\Pm$ comes directly from item $(b)$ of Theorem~6 in \cite{lebon2026geometric_tac}.
\end{pf}
\begin{lem}\label{lem:2}
   Given an optimal solution $\Imin=\partial_+(\Zm^\circ(\Imin),\mathbf{A})$ of Problem~\eqref{eq:opt_problem1}, and the set $\Wm:=\Zm^\circ(\Imin)$, if $\Cm=\partial_-(\Wm,\mathbf{A})$, then $\Wm$ is a controlled and conditioned invariant set of nodes.
\end{lem}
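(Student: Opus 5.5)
Two properties have to be verified, and both come from checking Definitions~\ref{def:controlled_inv} and~\ref{def:conditioned_inv} edge by edge on $\overline{\Gm}$ with $\Wm:=\Zm^\circ(\Imin)$ and $\Cm=\partial_-(\Wm,\mathbf{A})$.

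\emph{Controlled invariance.} By construction, $\Zm^\circ(\Imin)$ is the output of the iterative scheme of Proposition~4 of \cite{lebon2026geometric_tac}. That scheme returns the maximal controlled invariant node set contained in $\overline{\Vm}\smallsetminus\Tm$, so $\Wm$ is controlled invariant by definition. For a self-contained check, take an edge $(v_i,v_j)\in\overline{\Em}$ with $v_i\in\Wm$. Either $v_j\in\Wm$, or $v_j\in\Wm^\perp$. In the second case $v_j$ has an incoming edge from $\Wm$, so $v_j\in\partial_+(\Wm,\mathbf{A})=\Imin$ by Lemma~\ref{lem:1}. Either way $v_j\in\Wm\cup\Imin$, which is exactly Definition~\ref{def:controlled_inv}.

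\emph{Conditioned invariance.} Take an edge $(v_i,v_j)\in\overline{\Em}$ with $v_i\in\Wm$ and $v_i\notin\Cm$. We must show $v_j\in\Wm$. Suppose instead $v_j\in\Wm^\perp$. Then $v_i\in\Wm$ has an outgoing edge into $\Wm^\perp$, so by the definition of the in-boundary $v_i\in\partial_-(\Wm,\mathbf{A})=\Cm$. This contradicts $v_i\notin\Cm$, so $v_j\in\Wm$ and Definition~\ref{def:conditioned_inv} holds.

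Both invariance properties therefore follow from the boundary definitions together with the identity $\Imin=\partial_+(\Wm,\mathbf{A})$ from Lemma~\ref{lem:1}. The only point that needs care is consistency with the setup of system~\eqref{eq:lin-syst_ddp}. There we assumed $\Cm\cap\Tm=\emptyset$, and also that $\Cm$ is a legitimate measurement set; note that $\partial_-(\Wm,\mathbf{A})\subseteq\Wm$ and $\Wm\subseteq\overline{\Vm}\smallsetminus\Tm$, so $\Cm\cap\Tm=\emptyset$ holds automatically. One should also confirm that Definitions~\ref{def:controlled_inv} and~\ref{def:conditioned_inv} refer to the same edge set $\overline{\Em}$ induced by $\mathbf{A}$ that is used in $\partial_\pm(\cdot,\mathbf{A})$. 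Since $\mathbf{E}$ is diagonal and nonsingular, it does not change the sparsity pattern, so this is the case. I expect no real obstacle beyond this bookkeeping.
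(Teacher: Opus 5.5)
Your proof is correct and follows essentially the same route as the paper, which likewise verifies Definitions~\ref{def:controlled_inv} and~\ref{def:conditioned_inv} edge by edge (by contradiction) from $\Imin\supseteq\partial_+(\Wm,\mathbf{A})$ and $\Cm\supseteq\partial_-(\Wm,\mathbf{A})$, presenting this as the converse of Lemma~4 of \cite{lebon2026geometric_tac}. Your extra remark that $\Cm=\partial_-(\Wm,\mathbf{A})\subseteq\Wm$ automatically gives $\Cm\cap\Tm=\emptyset$ is a small consistency check that the paper leaves implicit.
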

\begin{pf}
From Lemma 4 in the Appendix C of \cite{lebon2026geometric_tac}, if $\Wm$ is a conditioned and controlled invariant set of nodes, then $\Imin\supseteq\partial_+(\Wm,\mathbf{A})$ and $\Cm\supseteq\partial_-(\Wm,\mathbf{A})$. The vice-versa is also true, in fact, if we assume by contradiction that $\Wm$ is not a controlled invariant set of nodes, from Definition~\ref{def:controlled_inv} $\exists$ at least a node $v_i\in\Wm$ and an edge $(v_i,v_j)\in\overline{\Em}$ s.t. $v_j\notin\Wm\cup\Imin$, which is a contradiction since $\Imin\supseteq\partial_+(\Wm,\mathbf{A})$. Similarly, if we assume by contradiction that $\Wm$ is not a conditioned invariant set of nodes, from Definition~\ref{def:conditioned_inv} $\exists$ at least a node $v_i\in\Wm\smallsetminus\Cm$ and an edge $(v_i,v_j)\in\overline{\Em}$ s.t. $v_j\notin\Wm$, which is a contradiction since $\Cm\supseteq\partial_-(\Wm,\mathbf{A})$. Noting that the inclusions are not strict, the statement follows from Lemma~\ref{lem:1}.
\end{pf}
\begin{thm}
   Given an optimal solution $\Imin=\partial_+(\Zm^\circ(\Imin),\mathbf{A})$ of Problem~\eqref{eq:opt_problem1}, the node sets $\Wm:=\Zm^\circ(\Imin)$ and $\Cm=\partial_-(\Wm,\mathbf{A})$, the unique output feedback $G\in\mathbb{R}^{m\times p}$, called ``friend'' of $\Wm$, i.e., s.t. $(\mathbf{A}-BGC)\Wm\subseteq\Wm$, is given by $G=B^\top\mathbf{A}C^\top$. 
\end{thm}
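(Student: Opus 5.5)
The plan is to verify the invariance $(\mathbf{A}-BGC)\Wm\subseteq\Wm$ column by column in the canonical basis, and to obtain uniqueness by compressing the invariance condition onto the block $\Imin\times\Cm$. Let $W$ be the basis matrix of $\Wm$, made of elementary columns $e_i$ with $v_i\in\Wm$. Invariance is equivalent to requiring that, for every $v_i\in\Wm$, the vector $(\mathbf{A}-BGC)e_i$ has zero entries in all rows indexed by $\Wm^\perp$. Since $\mathbf{E}$ is diagonal and non-singular, left multiplication by $\mathbf{E}^{-1}$ only rescales rows. It therefore does not change the support pattern, so working with $\mathbf{A}-BGC$ suffices.

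\textbf{Step 1 (support of $\mathbf{A}e_i$ outside $\Wm$).} Take $v_i\in\Wm$. A nonzero entry of $\mathbf{A}e_i$ in a row $v_j\in\Wm^\perp$ corresponds to an edge from $v_i$ to $v_j$ in $\overline{\Gm}$, using the paper's edge convention for $\overline{\Gm}$. By definition of the boundaries, such an edge forces $v_i\in\partial_-(\Wm,\mathbf{A})=\Cm$ and $v_j\in\partial_+(\Wm,\mathbf{A})=\Imin$, where the last equality uses Lemma~\ref{lem:1}. This is also the content of Lemma~\ref{lem:2}, namely the controlled invariance of $\Wm$. The consequences are:
\begin{itemize}
\item if $v_i\in\Wm\smallsetminus\Cm$, then $\mathbf{A}e_i\in\Wm$;
\item if $v_i\in\Cm$, the off-$\Wm$ part of $\mathbf{A}e_i$ is exactly $BB^\top\mathbf{A}e_i$.
\end{itemize}
The second point uses that $\Imin\subseteq\Wm^\perp$ and that $BB^\top$ is the coordinate projector onto $\Imin$.

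\textbf{Step 2 (existence).} With $G=B^\top\mathbf{A}C^\top$ we get $BGCe_i=BB^\top\mathbf{A}C^\top Ce_i$. Here $C^\top C$ is the coordinate projector onto $\Cm$, since $C$ has elementary rows and should be of size $p\times N$. Hence:
\begin{itemize}
\item if $v_i\notin\Cm$, then $BGCe_i=0$ and $(\mathbf{A}-BGC)e_i=\mathbf{A}e_i\in\Wm$ by Step 1;
\item if $v_i\in\Cm$, then $(\mathbf{A}-BGC)e_i=\mathbf{A}e_i-BB^\top\mathbf{A}e_i$, which removes exactly the off-$\Wm$ entries of $\mathbf{A}e_i$ identified in Step 1, so the result lies in $\Wm$.
\end{itemize}
In both cases we use that $\Cm=\partial_-(\Wm,\mathbf{A})\subseteq\Wm$.

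\textbf{Step 3 (uniqueness).} Suppose $G'$ is any friend of $\Wm$. Because $\Cm\subseteq\Wm$ and $\Imin\subseteq\Wm^\perp$, the invariance condition implies $B^\top(\mathbf{A}-BG'C)C^\top=0$. Since the columns of $B$ and the rows of $C$ are distinct elementary vectors, $B^\top B=I_m$ and $CC^\top=I_p$. The condition therefore reduces to $G'=B^\top\mathbf{A}C^\top$.

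The main obstacle is Step 1, specifically keeping the edge orientation consistent. One must check that the entry $\mathbf{A}_{ji}$, which governs the influence of $x_i$ on $\dot x_j$, corresponds to an edge $(v_i,v_j)$ in $\overline{\Gm}$, so that the "tail in $\Wm$, head outside" pairs are exactly the pairs in $\Cm\times\Imin$. If the convention turned out to be transposed, I would redo the argument with rows and columns swapped, which would change the formula for $G$ accordingly.
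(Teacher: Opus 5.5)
Your proof is correct, and it is more self-contained than the paper's, which is a citation. The paper uses Lemmas~\ref{lem:1} and~\ref{lem:2} only to certify that $\Wm$ is controlled and conditioned invariant with $\Imin=\partial_+(\Wm,\mathbf{A})$ and $\Cm=\partial_-(\Wm,\mathbf{A})$. It then imports both the formula $G=B^\top\mathbf{A}C^\top$ and its uniqueness from Theorem~11 of \cite{lebon2026geometric_tac}. You never pass through the invariance notions. Instead you work directly with the block of $\mathbf{A}$ whose rows are in $\Wm^\perp$ and whose columns are in $\Wm$, and note from the boundary definitions that its support lies in the sub-block $\Imin\times\Cm$. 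Existence follows because $BGC=BB^\top\mathbf{A}C^\top C$ cancels exactly that sub-block. Uniqueness follows by compressing the invariance condition with $B^\top(\cdot)C^\top$ and using $B^\top B=I_m$, $CC^\top=I_p$.

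The paper's route is shorter and keeps the statement inside the geometric framework, where a node set that is both controlled and conditioned invariant is what guarantees an output-feedback friend, and hence solvability of the DDPOF. Your route shows exactly which hypotheses are used. Uniqueness needs only $\Cm\subseteq\Wm$, $\Imin\subseteq\Wm^\perp$ and the elementary structure of $B$ and $C$. Existence needs only that every edge leaving $\Wm$ runs from $\Cm$ to $\Imin$. Optimality of $\Imin$ enters solely through $\Imin=\partial_+(\Wm,\mathbf{A})$. Your argument also justifies the edge-cancellation reading used in Section~\ref{NumEx}.

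You can drop the closing hedge on orientation. The convention $(v_i,v_j)\in\overline{\Em}\iff\mathbf{A}_{ji}\neq 0$ is the one under which Definitions~\ref{def:controlled_inv} and~\ref{def:conditioned_inv} express $\mathbf{A}\Zm\subseteq\Zm+\im(B)$ and $\mathbf{A}(\Sm\cap\ker C)\subseteq\Sm$. Section~\ref{NumEx} also explicitly identifies $e_{16}^\top\mathbf{A}e_{19}$ with the edge $(19,16)$. Moreover, a transposed convention would not change the formula for $G$, as you suggest. Step~3 never uses the orientation, so $B^\top\mathbf{A}C^\top$ would remain the only candidate friend, and only your existence step could fail.
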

\begin{pf}
The proof follows directly from Lemma~\ref{lem:1} and Lemma~\ref{lem:2}, and from Theorem~11 of \cite{lebon2026geometric_tac}.
\end{pf}
%\subsection{Interpretation}
%Graphically, the effect of $G=B^\top \mathbf{A}C^\top$ is to remove the sub-paths of %length 1 described in item~$\ref{item:c_thm:ddpof}$ of Theorem~\ref{thm:ddpof}, that is, %canceling in the rows of $\mathbf{A}$ identified by $\Imin$, the elements in the columns %identified by $\Cm$.

\subsection{Practical Considerations}\label{Sec:PracCon}

For a power system application, the disturbance decoupling control strategy requires that the phases of some of the nodes are measured. This can be done using so-called Phasor Measuring Units (PMU), which have to be installed at the nodes in the set of output measurements $\Cm$, and communicated to the set of control inputs $\Imin$. In the upcoming example, we will apply the disturbance decoupling to the simplified power system model described in Section~\ref{sec:electr_model}. This is a heterogeneous system with rather slow dynamics in $\Vm_1$ and very fast dynamics in $\Vm_2$, rendering the system almost singular. This will result in an almost ideal step-like control action from the active power injection under some of the possible disturbances. The control actuation is performed by a battery, or possibly through load control, and cannot act instantaneously without risk of damage to the components. To account for this and to incorporate communication time from the PMUs, the applied control is the ideal signal filtered through a low-pass filter with different time constants. Here we lump both time delay and the desired smoothening of the control input into a single low-pass filter with different time constants. This shows that the disturbance decoupling works well also under these practical considerations. 
%\rednote{LL: this part is the critical one, since it is the one applied to the almost singular power network case. Battery, inductance compensation, load control, will be here.}
\section{Numerical Example}\label{NumEx}
In this numerical section, we apply our results to the IEEE 39-bus system with 10 generators, first developed by \cite{Ath79}. The system represents the New England electric power transmission grid, in the north-east of the U.S.A.. This choice is motivated by the fact that the considered power network is an IEEE benchmark system (\cite{canizares2016benchmark}), and that for this network, a stable phase-locking synchronization exists (see Table 1 in \cite{dorfler2013synchronization}), and the corresponding linearized system is \emph{small-signal stable} (\cite{datta2020small}). For this system, the nominal frequency is $60$ Hz and the p.u. data is taken from \cite[Appendix A]{pai89} to build $\mathbf{E}$ and $\mathbf{A}$ as in the system~\eqref{eq:lin-syst_ddp}. The generator nodes (numbered 30 to 39 in \cite[Appendix A (p. 224)]{pai89}, whose dynamics are second-order Newtonian, are nodes in $\Vm_1$. Similarly, the non-generator nodes, whose dynamics are first-order kinematic, are nodes in $\Vm_2$. As the numbering of the network nodes of this power network is established within the power systems community, to whom this paper is partially addressed, the original ordering is retained. The system corresponds to the state ordering in equation~\eqref{eq:descriptor} if the generator buses are assigned to the first $10+10$ components of the state vector, representing resp. the generators’ frequencies and phases.

%\begin{example}
\emph{Problem Setup:}
Once $\mathbf{A}$ is constructed, it is possible to associate the extended digraph $\overline{\Gm}$, as shown in Fig.~\ref{fig:ex1_2}. For this test system, we consider $\epsilon=10^{-4}$ for the construction of $D_{\Vm_2}$ in $\mathbf{E}$, as discussed in Section~\ref{sec:electr_model}. We take our disturbance and target sets as resp. $\Dm=\{22,\,44\}$ and $\Tm=\{40,\,41\}$.%\\\\

\emph{Solution:} The minimal input cardinality solution of Problem~\eqref{eq:opt_problem1} is $\Imin=\{16\}$. Following the procedure outlined in Lemmas 1 and 2, with $\Imin=\{16\}$ the disturbance decoupling via output feedback is solvable if $\Cm=\{19,\,21,\,24\}$. The decoupling friend $G=[55.3272\, \; 78.7903\, \;181.4040]$$=e_{16}^\top \mathbf{A}[e_{19}\,e_{21}\,e_{24}]$ has the effect of canceling the edges $(19,16)$, $(21,16)$ and $(24,16)$ in closed-loop. 

\emph{Simulations:}
To verify our results, we assume that a step of power of amplitude 1 (corresponding to the p.u. base of 100 MW) is active from $t=0$ in the disturbance node $44$, directly affecting the frequency deviation of generator $44$, while a step of power of amplitude 0.5 (corresponding to 50 MW in the p.u. base) is active from $t=20\,s$ in the disturbance node $22$, directly affecting the phase deviation of node $22$. We make the choice of step disturbances since it can clearly be illustrated in the results, and it is the common disturbance shape when considering the dimensioning disturbance for system design (\cite{SVK17}). This corresponds to sudden disconnection of a generator, or consumption cluster. Since we work with a linearized model, the sign of the disturbance only affects the sign of the output. To be able to see the combined effect of the two disturbances we choose both to have positive sign, although this would in reality not be the case for a disturbance at generator node $44$. We denote with $\Vm_{\text{dec}}\subseteq\overline{\Vm}$ the subset of network nodes that are decoupled in the closed-loop (in Fig.~\ref{fig:ex1_2} such nodes correspond to the nodes to the left of node $16$) and with $\Vm_{\text{dist}}\subseteq\overline{\Vm}$ the subset of network nodes that are disturbed in the closed-loop (such nodes correspond to the nodes to the right of node $16$ in Fig.~\ref{fig:ex1_2}). Clearly $\Tm\subseteq\Vm_{\text{dec}}$ and $\Dm\subseteq\Vm_{\text{dist}}$. As shown in Fig.~\ref{fig:ex1_3}, the decoupling output feedback applied at node~16 not only counteracts the propagation of the disturbance toward $\Vm_{\text{dec}}$, but also asymptotically stabilizes the frequency deviations of the generators in $\Vm_{\text{dist}}$. In the open-loop case, these frequency deviations converge to a nonzero steady-state value, leading to a linear drift of the nodes’ phase deviations in $\overline{\Vm}$ at steady-state, characterized by identical slopes. In the time interval $[0,20\,s[$ we observe that the phase deviations grow inversely proportional to the electrical distance from the disturbed node $44$. Around $t=20\,s$, the nodes’ phase deviations exhibit a sudden jump, whose amplitude is inversely proportional to the electrical distance from the disturbed node~$22$. The initial conditions are set to zero, and the simulation spans one minute.

\emph{Practical Considerations:} 
We aim to verify whether the proposed methods remain valid when the sensing-actuation protocol is not instantaneous, thereby representing a more realistic scenario, as described in Section~\ref{Sec:PracCon}. To do so, we assume the system being controlled in closed-loop by a filtered control input $u_{\text{real}}$, which in the Laplace domain is given by the product of a low-pass filter $\frac{1}{\tau s+1}$ and the exact output feedback $u_{\text{ideal}}$. We simulate the impact on the frequency deviations of the generators in $\Vm_{\text{dec}}$ for different values of the time constant $\tau$. As shown in Fig.~\ref{fig:ex1_4}, even with a time constant of $1\,s$, the frequency deviation impulses generated by the disturbance at $t=20\,s$ exhibit a peak of approximately $0.005$ Hz, demonstrating the robustness of the proposed decoupling strategies with respect to sensing-actuation delays. The filtered control input is plotted for different time constants and is compared with the exact output feedback. 

\emph{Discussion:}
In Fig.~\ref{fig:ex1_4} it can be seen that the steady state control input becomes the sum of the 2 disturbances, i.e., the control input completely compensates for the disturbances. 
%One of the reasons for static droop is that there becomes a steady state frequency deviation that a secondary control infrastructure can act on, after the immediate, automatic response. Who is responsible for this is often determined by a market mechanism. \rednote{add ref if we keep these sentences} However, with our approach, this frequency deviation is completely eliminated. Instead of a steady state frequency deviation for the market to act on, it can instead act on the power output from our controller, if it is sent to the transmission system operator. 
The approach presented in this paper results in a feedback from phase measurements to the power applied in node 16. This requires measurement of the aforementioned phases through a PMU. The effect of the controller is that it compensates for the loss of active power in one region by inserting it in the node(s) of choice. The same result can thus be achieved by measuring the power flow entering the node(s) in $\Imin$ from the disturbed area and adding the lacking active power. This would result in a local controller with similar behavior. Different values of $\epsilon$ and $\tau$ were also tested. A smaller $\epsilon$ made no noticeable difference. If both $\epsilon$ and $\tau$ are really small ($<10^{-6}$), the control action becomes a step of $-50$~MW at time $t=20\,s$, with even smaller frequency deviations.
\begin{figure}[htb!]
    \centering
    \includegraphics[clip=true, width=\linewidth]{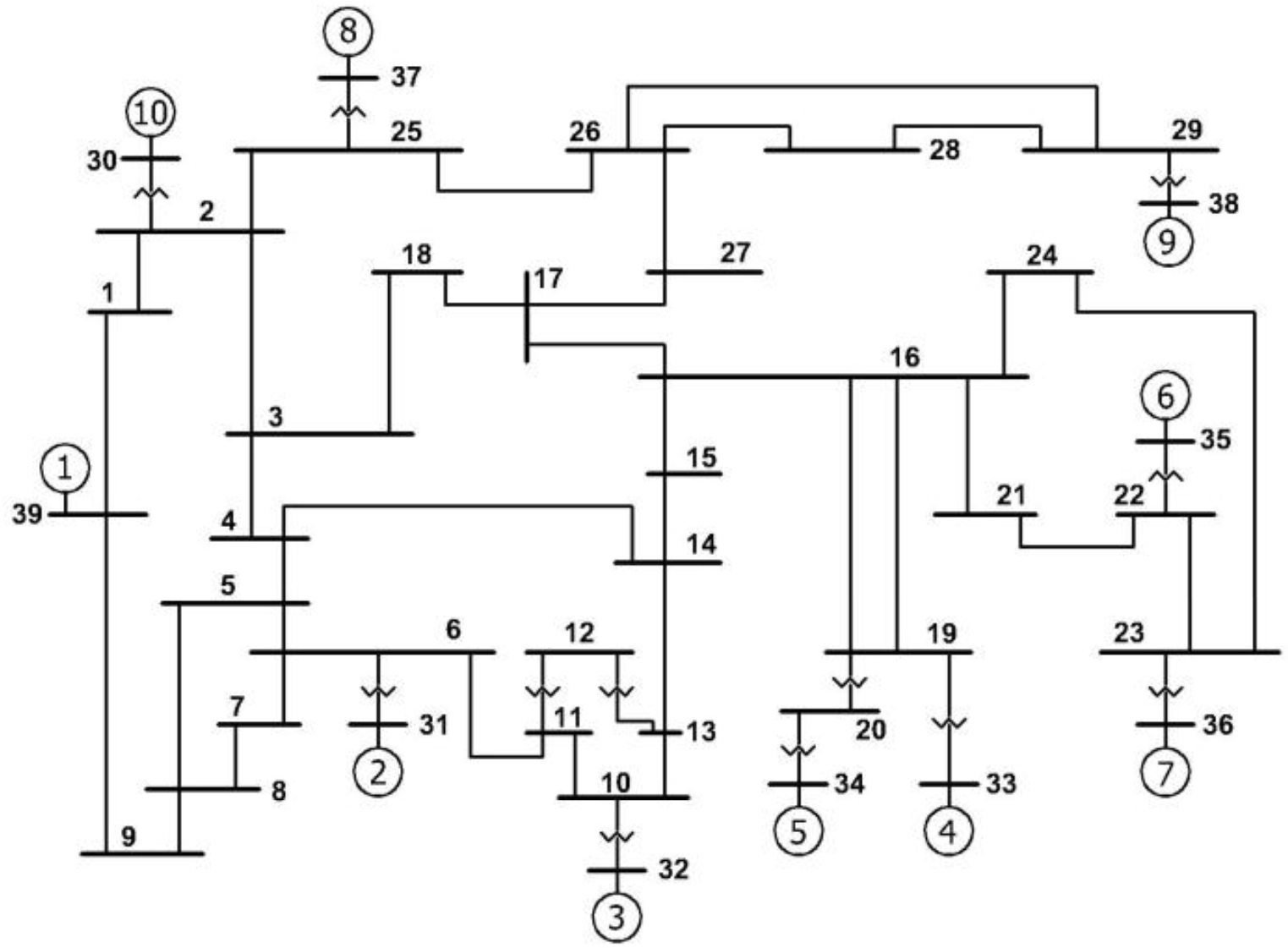}
    \caption{IEEE New England 39 Bus test system single-line diagram \cite[Appendix A (p. 224)]{pai89}. }
    \label{fig:ex1_1}
\end{figure}
\begin{figure}[htb!]
    \centering
    \includegraphics[trim=7.5cm 7cm 7cm 6cm, clip=true, width=1\linewidth]{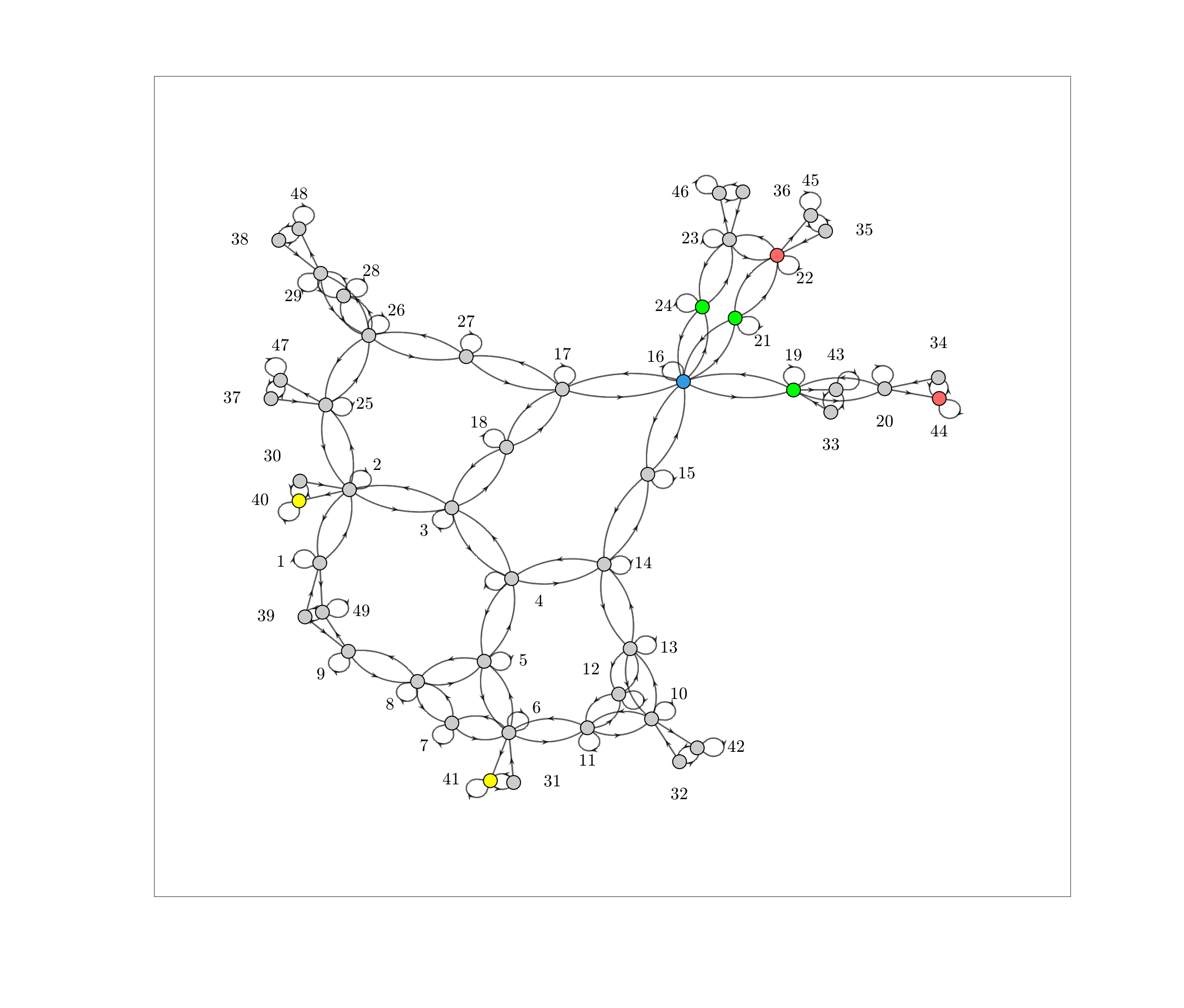}%\vspace{-0.2cm}
    \caption{Digraph of IEEE New England 39 Bus. Disturbances are in red, targets in yellow, control inputs in blue, output nodes in green.}
    \label{fig:ex1_2}
\end{figure}
\begin{figure}[htb!]
    \centering
    \includegraphics[clip=true, width=1\linewidth]{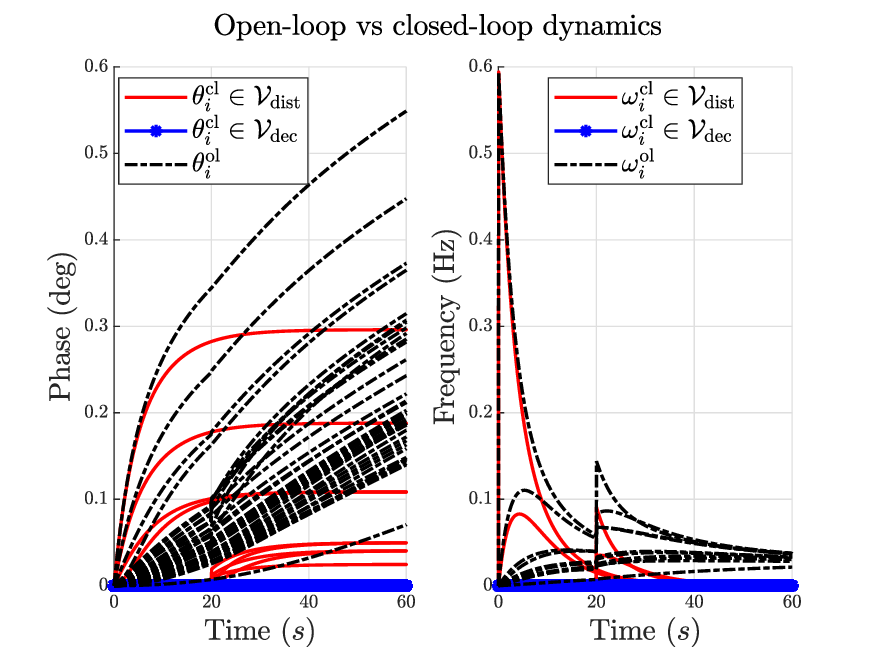}\vspace{-0.2cm}
    \caption{Decoupling effect in closed-loop of the output feedback over $\Vm_{\text{dec}}$ and its stabilizing effect over $\Vm_{\text{dist}}$. The superscripts ``ol" and ``cl" stand for ``open loop" and ``closed loop", respectively.}
    \label{fig:ex1_3}
\end{figure}
\begin{figure}[htb!]
    \centering
    \includegraphics[clip=true, width=1.1\linewidth]{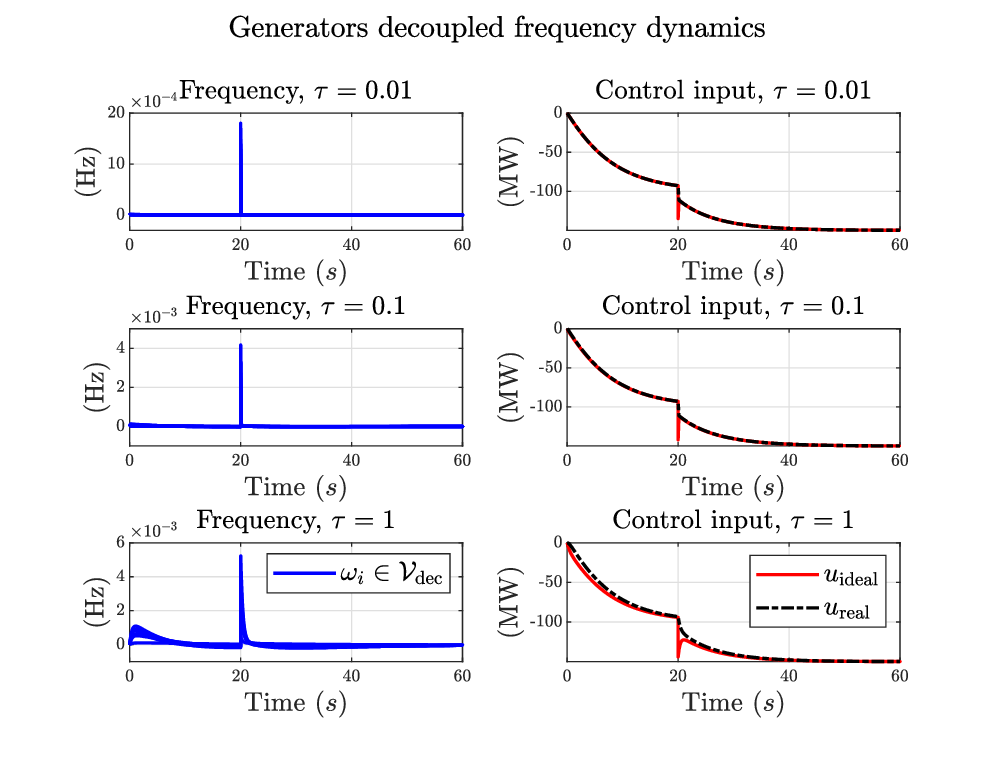}\vspace{-0.2cm}
    \caption{Decoupling of generators' frequency deviations in closed-loop with delayed output feedback control.}
    \label{fig:ex1_4}
\end{figure}
\label{ex:ex1}
%however, it would not necessarily drive the phases in the target region to zero, as seen in \ref{fig:ex1_3}. The reason for this is that without a direct phase reference from the PMUs, there is nothing that compensates for the transient effects of the disturbance when the control action is applied through the low-pass filter. The phases in the target area will all shift with the same value, and all frequency deviations will be zero, since this constitutes identical power flows to the situation before the disturbances. 
%\hfill\qed
%\end{example}
\section{Conclusions}
We have examined the problem of minimal input cardinality disturbance decoupling via output feedback in a system of coupled oscillators, linearized around a stable phase-locked synchronized state. When applied to power networks, this method ensures the real-time decoupling of disturbances originating in one region from propagating toward a designated region to be protected. This is achieved without interrupting nominal operation, %, i.e., without disconnecting transmission lines or isolating generators affected by the disturbance,
thereby preserving the continuity and stability of the power supply. The numerical simulations have been carried out on the IEEE 39-bus New England power network. These results substantiate the efficacy of the decoupling mechanism and its resilience to sensing-actuation delays, while ensuring full compatibility with the internal stability properties of the system. Future work should extend the results presented in this paper to more detailed power network models that account for voltage dynamics, reactive power flows, and higher-order generator dynamics.

%\bibliography{ifacconf}

\begin{thebibliography}{20}
\providecommand{\natexlab}[1]{#1}
\providecommand{\url}[1]{\texttt{#1}}
\providecommand{\urlprefix}{URL }
\expandafter\ifx\csname urlstyle\endcsname\relax
  \providecommand{\doi}[1]{doi:\discretionary{}{}{}#1}\else
  \providecommand{\doi}{doi:\discretionary{}{}{}\begingroup
  \urlstyle{rm}\Url}\fi

\bibitem[{Athay et~al.(1979)Athay, Podmore, and Virmani}]{Ath79}
Athay, T., Podmore, R., and Virmani, S. (1979).
\newblock A practical method for the direct analysis of transient stability.
\newblock \emph{IEEE Trans. Pow. Appar. and Sys.}, PAS-98(2), 573--584.
\newblock \doi{10.1109/TPAS.1979.319407}.

\bibitem[{Basile and Marro(1969)}]{basile1969controlled}
Basile, G. and Marro, G. (1969).
\newblock Controlled and conditioned invariant subspaces in linear system
  theory.
\newblock \emph{J. of Opti. Theo. and Appl.}, 3, 306--315.

\bibitem[{Byrne et~al.(2014)Byrne, Trudnowski, Neely et~al.}]{Bryne14}
Byrne, R.H., Trudnowski, D.J., Neely, J.C., et~al. (2014).
\newblock Optimal locations for energy storage damping systems in the western
  north american interconnect.
\newblock In \emph{2014 IEEE PES Gen. Meet. | Conf. \& Expo.}, 1--5.
\newblock \doi{10.1109/PESGM.2014.6939875}.

\bibitem[{Canizares et~al.(2016)Canizares, Fernandes, Geraldi
  et~al.}]{canizares2016benchmark}
Canizares, C., Fernandes, T., Geraldi, E., et~al. (2016).
\newblock Benchmark models for the analysis and control of small-signal
  oscillatory dynamics in power systems.
\newblock \emph{IEEE Trans. Pow. Sys.}, 32(1), 715--722.

\bibitem[{Datta(2020)}]{datta2020small}
Datta, S. (2020).
\newblock Small signal stability criteria for descriptor form power network
  model.
\newblock \emph{IJC}, 93(8), 1817--1825.

\bibitem[{D{\"o}rfler and Bullo(2014)}]{dorfler2014synchronization}
D{\"o}rfler, F. and Bullo, F. (2014).
\newblock Synchronization in complex networks of phase oscillators: A survey.
\newblock \emph{Automatica}, 50(6), 1539--1564.

\bibitem[{D{\"o}rfler et~al.(2013)D{\"o}rfler, Chertkov, and
  Bullo}]{dorfler2013synchronization}
D{\"o}rfler, F., Chertkov, M., and Bullo, F. (2013).
\newblock Synchronization in complex oscillator networks and smart grids.
\newblock \emph{Proc. of the Nat. Acad. of Sci.}, 110(6), 2005--2010.

\bibitem[{Eriksson et~al.(2017)Eriksson, Modig, and Westberg}]{SVK17}
Eriksson, R., Modig, N., and Westberg, A. (2017).
\newblock Fcr-n design of requirements.
\newblock Technical report, ENTSO-e.
\newblock Accessed: 2025-11-06.

\bibitem[{Fletcher and Aasaraai(1989)}]{fletcher1989disturbance}
Fletcher, L.R. and Aasaraai, A. (1989).
\newblock On disturbance decoupling in descriptor systems.
\newblock \emph{SIAM J. on Contr. and Opti.}, 27(6), 1319--1332.

\bibitem[{Glover et~al.(2010)Glover, Sarma, and Overbye}]{Glover}
Glover, J.D., Sarma, M.S., and Overbye, T.J. (2010).
\newblock \emph{Power System Analysis and Design}.
\newblock Cengage Learning.

\bibitem[{Lebon and Altafini(2026)}]{lebon2026geometric_tac}
Lebon, L.C.G. and Altafini, C. (2026).
\newblock Geometric control theory over networks: Minimal node cardinality
  disturbance decoupling problems.
\newblock \emph{IEEE Transactions on Automatic Control}.

\bibitem[{Lindberg and Pates(2022)}]{lindberg2022fundamental}
Lindberg, J. and Pates, R. (2022).
\newblock Fundamental limitations on the control of lossless systems.
\newblock \emph{IEEE Contr. Sys. Lett.}, 7, 157--162.

\bibitem[{Machowski et~al.(2020)Machowski, Lubosny, Bialek, and Bumby}]{Mach20}
Machowski, J., Lubosny, Z., Bialek, J., and Bumby, J. (2020).
\newblock \emph{Power System Dynamics: Stability and Control}.
\newblock Wiley.

\bibitem[{Pai(1989)}]{pai89}
Pai, M. (1989).
\newblock \emph{Energy Function Analysis for Power System Stability}.
\newblock Kluwer Academic Publishers.

\bibitem[{Pasqualetti et~al.(2013)Pasqualetti, D{\"o}rfler, and
  Bullo}]{pasqualetti2013attack}
Pasqualetti, F., D{\"o}rfler, F., and Bullo, F. (2013).
\newblock Attack detection and identification in cyber-physical systems.
\newblock \emph{IEEE Trans. Autom. Contr.}, 58(11), 2715--2729.

\bibitem[{Pates(2022)}]{Pates22}
Pates, R. (2022).
\newblock Passive and reciprocal networks: From simple models to simple optimal
  controllers.
\newblock \emph{IEEE Contr. Sys. Mag.}, 42(3), 73--92.
\newblock \doi{10.1109/MCS.2022.3157116}.

\bibitem[{Simpson-Porco et~al.(2013)Simpson-Porco, D{\"o}rfler, and
  Bullo}]{simpson2013synchronization}
Simpson-Porco, J.W., D{\"o}rfler, F., and Bullo, F. (2013).
\newblock Synchronization and power sharing for droop-controlled inverters in
  islanded microgrids.
\newblock \emph{Automatica}, 49(9), 2603--2611.

\bibitem[{Skardal and Arenas(2015)}]{skardal2015control}
Skardal, P.S. and Arenas, A. (2015).
\newblock Control of coupled oscillator networks with application to microgrid
  technologies.
\newblock \emph{Sci. Adv.}, 1(7), e1500339.

\bibitem[{Wonham and Morse(1970)}]{wonham1970decoupling}
Wonham, W.M. and Morse, A.S. (1970).
\newblock Decoupling and pole assignment in linear multivariable systems: a
  geometric approach.
\newblock \emph{SIAM J. on Contr.}, 8(1), 1--18.

\bibitem[{Yin et~al.(2011)Yin, Mehta, Meyn, and
  Shanbhag}]{yin2011synchronization}
Yin, H., Mehta, P.G., Meyn, S.P., and Shanbhag, U.V. (2011).
\newblock Synchronization of coupled oscillators is a game.
\newblock \emph{IEEE Trans. Autom. Contr.}, 57(4), 920--935.

\end{thebibliography}
\bibliographystyle{ifacconf}

\end{document}